\newtheorem{observation}{Observation}
\newtheorem{lemma}{Lemma}
\newtheorem{theorem}{Theorem}
\begin{document}
%
\title{Sixsoid: A new paradigm for $k$-coverage in 3D Wireless Sensor Networks}

\author{\IEEEauthorblockN{Nabajyoti Medhi}
\IEEEauthorblockA{Department of Computer Science and Engineering\\
National Institute of Technology, Meghalaya\\
Shillong, INDIA\\
Email: nabajyotimedhi21@gmail.com}
\and
\IEEEauthorblockN{Manjish Pal}
\IEEEauthorblockA{Department of Computer Science and Engineering\\
National Institute of Technology, Meghalaya\\
Shillong, INDIA\\
Email: manjishster@gmail.com}
}

%


\maketitle

\begin{abstract}
Coverage in 3D wireless sensor network (WSN) is always a very critical issue to deal with. Coming up with good coverage models implies energy efficient networks. $K$-coverage is a model that ensures that every point in a given 3D Field of Interest (FoI) is guaranteed to be covered by $k$ sensors. In the case of 3D, coming up with a deployment of sensors that gurantees $k$-coverage becomes much more complicated than in 2D. The basic idea is to come up with a convex body that is guaranteed to be $k$-covered by taking a specific arrangement of sensors, and then fill the FoI with non-overlapping copies of this body.  
In this work, we propose a new geometry for the 3D scenario which we call a \textbf{Sixsoid}. Prior to this work, the convex body  which
was proposed for coverage in 3D was the so called \textbf{Reuleaux Tetrahedron} \cite{A, AD1}. Our construction is motivated from a construction that can be applied to the 2D version of  the problem \cite{MS} in which it implies better guarantees over the \textbf{Reuleaux Triangle}.  Our contribution in this paper is twofold, firstly we show how Sixsoid gurantees more coverage volume over Reuleaux Tetrahedron, secondly we show how Sixsoid also guarantees a simpler and more pragmatic deployment strategy for 3D wireless sensor networks. In this paper, we show the construction of Sixsoid, calculate its volume and discuss its implications on the $k$-coverage in 3D WSNs.
\end{abstract}


\begin{IEEEkeywords}
Wireless Sensor Networks, $k$-coverage, Reuleaux Tetrahedron, Sixsoid.
\end{IEEEkeywords}

%
\IEEEpeerreviewmaketitle

\section{Introduction}
Wireless Sensor Networks (WSN) consists of energy constrained sensor nodes having limited sensing range, communication range, processing power and battery. The sensors generally follow different hop-by-hop ad-hoc data gathering protocols to gather data and communicate. The sensors can sense the information whichever lies in its sensing range using RF-ID (Radio Frequency Identification). The sensed data can be communicated to another sensor node which lies within communication range of the sender. 
The gathered data finally reaches the base station which may be hops apart from any sensor. Since sensor transceivers are omni-directional, we assume the sensing and communication ranges as spheres of certain radii. The network is called homogeneous when all the sensors have the same radii and heterogeneous otherwise. Coverage of a certain FoI and deployment of sensors are an issue of research where the aim is to make energy efficient networks. Sensor deployment and coverage in 2D requires simpler strategies and protocols as compared to 3D. 3D sensor network is used generally for underwater sensor surveillance, floating lightweight sensors in air and space, air and water pollution monitoring, forest monitoring, any other possible 3D deployments etc. Real life applications of WSNs are mostly confined to 3D environments. The term $k$-coverage in 3D is used to describe a scenario in which the sensors are deployed in such a way that $k$ sensors cover a common region. More precisely, a point in 3D is said to be $k$-covered if it lies in the region that is common to the sensing spheres of $k$-sensors, $k$ being termed as degree of coverage. Indeed, the ultimate aim of this project is to \textbf{come up with a deployment strategy for sensors that guarantees $k$-coverage of a given 3-dimensional Field of Interest (FoI) for large values of $k$}. A first step to address this issue is to come up with a 3-dimensional convex body (tile) that is guaranteed to be $k$-covered by a certain arrangement of $k$ (or more) sensors, and then fill the FoI with non overlapping copies of that shape by repeating the same arrangement.

The term Sixsoid has been coined in this paper to signify a geometrical shape that resembles a super-ellipsoid \cite{W}. 
Sixsoid is created by the intersection of six sensors, each having the same sensing radius, which are placed on the six face centers of a cube of side length $r$ where $r$ is the radius of the sensing spheres. We compare the implications of this convex body with the previously proposed model on 3D $k$-coverage based on Reuleaux Tetrahedron. Recall that the Reuleaux Tetrahedron, is created by the intersection of four spheres placed on the vertices of a regular tetrahedron of side length $r$. In an attempt to guarantee 4-coverage of the given field, \cite{AD1} considers a scenario in which four sensors are placed on the vertices of a regular tetrahedron of side length equal to the sensing radius $r$. It is well known that the volume of 
of a Reuleaux Tetrahedron constructed out of a regular tetrahedron of side length $r$ is approximately $0.422r^3$ \cite{RT}. Unfortunately it is not
possible to obtain a tiling of the 3D space with non-overlapping copies of Reuleaux Tetrahedron \cite{A}. In fact, such a tiling is not
possible even with a tetrahedron \cite{MW}. In \cite{A} a plausible deployment strategy is hinted that exploits this construction by overlapping two Reuleaux Tetrahedrons, gluing them at a common tetrahedron's face, but this deployment doesn't seem to be pragmatic.  In this paper, we propose another 3D solid (the Sixsoid) for this purpose and an extremely pragmatic deployment strategy. We show that using our deployment strategy one is guaranteed to have 6-coverage of approximately 68.5\% and 4-coverage of 100\%  of a given 3D polycubical Field of Interest which is a significant improvement over the gurantees provided in \cite{A, AD1}.

\subsection{Prior Work on 3D $k$-coverage}
There are relatively fewer works done to address the problem of 3D $k$-coverage as compared to the 2D version of the problem. In \cite{A,AD1} the authors make significant progress on this problem. They discuss the relevance of Reuleaux Tetrahedron in various issues dealing with connectivity and $k$-coverage in 3D. Prior to that the following works disscuss the 3D version; in \cite{XLL}, authors propose a coverage optimization algorithm based on sampling for 3D underwater WSNs. \cite{MKPS} proposes an optimal polynomial time algorithm based on voronoi diagram and graph search algorithms, in \cite{HT} authors suggest algorithms to ensure $k$-coverage of every point in a field  where sensors may have same or different sensing radii, \cite{KLB} defines the minimum number of sensors for $k$-coverage with a probability value, \cite{WY} studies the effect of sensing radius on the probability of $k$-coverage, \cite{BKXYL} proposes an optimal deployment strategy to deal with full coverage and 2-connectivity, \cite{AH} brought forward a sensor placement model based on Voronoi structure to cover a 3D region, in \cite{R}, authors provide a study on connectivity and coverage issues in a randomly deployed 3D WSN. In this paper, we compare our proposed Sixsoid based $k$-coverage model with the existing Reuleaux Tetrahedron based model \cite{A}. Comparison has been done in terms of volume of $k$-coverage, sensor spatial density in 3D and placement and packing in 3D. 

\section{Our Contribution}\label{oc}
In a previous work \cite{MS}, the authors used a convex body that resembles a super-ellipse for $k$-coverage in 2D homogeneous WSN. This model \cite{MS} showed much better efficiency in terms of area of $k$-coverage, energy consumption and requirement of less number of sensors as compared to the $k$-coverage model with Reuleaux Triangle \cite{A}. This work done in \cite{A} is further extended to 3D by taking Reuleaux Tetrahedron. Motivated by the fact that considering a ``super-elliptical'' tile has proven to be much better than the Reuleaux Triangle based model \cite{MS}, in this work we extend that idea to 3D. The main hurdle was to compute the volume of the solid 
(Sixsoid) generated out of our construction. Unlike the Reuleaux Tetrahedron its volume was not already known. Another reason for opting our construction is its resemblence to a superellipsoid because of which it fits much better in 3D rather than a Reuleaux Tetrahedron. Moreover, Sixsoid provides a practical and easier packing in 3D. Once we fill the FoI with cubes of side length $R$, a Sixsoid is formed by the overlapping of six sensors lying on the six face centers of a cube. So, every cube contains a Sixsoid inside it. When we consider packing in 3D, cubes are space filling. We propose a result in a later section that states that our deployment strategy ensures 4-coverage of the entire FoI along with fact that approximately 68\% of it is 6-covered. Thus, the Sixsoid based model ensures at least 4-coverage of the entire FoI. Packing of Reuleaux Tetrahedron in 3D as proposed in \cite{A} is harder to achieve and it is not feasible for practical deployment. 
In the subsequent sections, we calculate the volume of the Sixsoid with the supporting theorems and compare our results with the Reuleaux Tetrahedron based model. In the remaining part of the paper, we discuss the computation of volume of a Sixsoid in Section 3. In Section 4 we discuss the deployment and packing strategy in 3D that exploits the structure of Sixsoid. Section 5 is devoted to the comparison of our model with the results proposed in \cite{AD1}. We end the paper with conclusion and potential research directions for future. 

\section{Computing the Volume of Sixsoid}
The \textbf{Sixsoid} which we denote by $\mathcal{S}(R)$ is defined as the 3-D object obtained due to the intersection of
six spheres placed on the centers of the faces of a cube of side-length $R$. The radius of
each sphere is $R$. In this section, we compute the volume $V$ of $\mathcal{S}(R)$.
Our basic strategy is to take a horizontal cross section (via a sliding plane) of the cube at a distance of $x$ from the top of
the cube and perform the following integration,
\begin{eqnarray*}
V = \int^{R}_{0} A(x)\,dx = 2 \int^{R/2}_{0} A(x)\,dx
\end{eqnarray*}
where $A(x)$ is the area of the cross section and the last equality follows due to the 
symmetry of $\mathcal{D}(R)$.

Henceforth, we will use the following notation regarding the geometry of
the objects involved in this section. Let $F_{x_1}$(top), $F_{x_2}$(bottom), $F_{y_1}$(left), $F_{y_2}$(right),
$F_{z_1}$(front), $F_{z_2}$(back) be the faces of the cube and  we denote the respective spheres (centered at face centers) 
by $S_{x_1},S_{x_2}, S_{y_1},S_{y_2}, S_{z_1},S_{z_2}$. We will denote the sliding plane by $P_{x}$ and let $r_{\alpha_i}$ be its radius. Let $C_{\alpha_i}$ be the circle which is obtained by the intersection of $P_x$ with $S_{\alpha_i}$ where $\alpha\in \{x,y,z\}$ and $i \in \{1,2\}$. In the remainder of this section we show the behavior of the cross-section as the plane $P_x$ slides from $x = 0$ to $x = \frac{R}{2}$.

\begin{observation}
 For all $0 \leq x \leq R/2$, $C_{x_1}$ and $C_{x_2}$ are concentric circles and $r_{x_2} \leq r_{x_1}$.
Also as $x$ varies from $0$ to $R/2$, $r_{x_1}$ monotonically decreases, $r_{x_2}$ monotonically increases and $C_{x_1} = C_{x_2}$ at
$x = R/2$.
\end{observation}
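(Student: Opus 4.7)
The plan is to set up Cartesian coordinates aligned with the cube so that the top and bottom face centers lie on a common vertical axis; call it the $x$-axis, directed downward, with the top face center at $x=0$ and the bottom face center at $x=R$. Because both $S_{x_1}$ and $S_{x_2}$ are centered on this axis and the sliding plane $P_x$ is perpendicular to it, each of the two intersection circles $C_{x_1}$ and $C_{x_2}$ must be centered at the orthogonal projection of its sphere's center onto $P_x$. Both projections coincide with the single point where the axis meets $P_x$, which already gives concentricity.

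For the radii, I would apply the Pythagorean relation inside each sphere. The center-to-plane distances are $x$ for $S_{x_1}$ and $R-x$ for $S_{x_2}$, so
\begin{equation*}
r_{x_1} = \sqrt{R^2 - x^2}, \qquad r_{x_2} = \sqrt{R^2 - (R-x)^2}.
\end{equation*}
Monotonicity on $[0, R/2]$ is immediate from these expressions: as $x$ grows from $0$ to $R/2$, the quantity $R^2 - x^2$ strictly decreases while $R^2-(R-x)^2$ strictly increases. The inequality $r_{x_2} \leq r_{x_1}$ on the same interval follows because $x \leq R-x$ there, which in turn gives $R^2-x^2 \geq R^2 - (R-x)^2$. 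Substituting $x=R/2$ into either expression returns the common value $\sqrt{3}\,R/2$, and combined with concentricity this forces $C_{x_1} = C_{x_2}$ at the midpoint.

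I do not expect any real obstacle in this proof; the whole statement is essentially the axial symmetry of two equal spheres sharing a perpendicular axis. The only point that warrants care is being explicit about the coordinate frame and the convention that $x$ measures distance from the top, since the monotonicity claims (which radius grows and which shrinks) depend on this choice.
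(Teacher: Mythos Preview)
Your argument is correct and complete. The paper itself states this observation without proof, so there is nothing to compare against; your explicit computation of $r_{x_1}=\sqrt{R^2-x^2}$ and $r_{x_2}=\sqrt{R^2-(R-x)^2}$ (the latter of which the paper does use later, in the proof of Lemma~1) together with the axial-symmetry remark fully justifies every claim.
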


\begin{observation}
For $0 \leq x \leq R/2$, $r_{y_1} = r_{y_2} = r_{z_1} = r_{z_2} = \sqrt{\frac{3R^2}{4} + Rx - x^2}$.
\end{observation}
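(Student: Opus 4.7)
The plan is to set up coordinates, reduce to a single distance computation by symmetry, and then apply the Pythagorean theorem for a sphere--plane intersection.

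First I would fix coordinates so that the cube occupies $[0,R]^3$, with the ``top'' face $F_{x_1}$ lying in the plane $u = R$ (using $u$ for the vertical axis along which $P_x$ slides). Then the face centers of the four side faces $F_{y_1},F_{y_2},F_{z_1},F_{z_2}$ all have vertical coordinate $u = R/2$, since each side face is centered at the midheight of the cube. The sliding plane $P_x$ sits at $u = R - x$, so its signed vertical distance from each of these four centers is exactly $(R-x) - R/2 = R/2 - x$, and this quantity is nonnegative throughout $0 \leq x \leq R/2$.

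Next I would invoke the standard fact that if a sphere of radius $\rho$ has its center at perpendicular distance $d \leq \rho$ from a plane, then the plane cuts the sphere in a circle of radius $\sqrt{\rho^2 - d^2}$. Applied to each of $S_{y_1},S_{y_2},S_{z_1},S_{z_2}$ with $\rho = R$ and $d = R/2 - x$, this yields
\begin{equation*}
r_{y_1} = r_{y_2} = r_{z_1} = r_{z_2} = \sqrt{R^2 - (R/2 - x)^2}.
\end{equation*}
A short algebraic simplification of the radicand,
\begin{equation*}
R^2 - \left(\tfrac{R}{2} - x\right)^2 = R^2 - \tfrac{R^2}{4} + Rx - x^2 = \tfrac{3R^2}{4} + Rx - x^2,
\end{equation*}
gives the claimed formula.

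There is no real obstacle here; the only thing to be careful about is the geometric justification that all four side-face centers really do sit at the same height $u = R/2$, so that a single distance computation handles all four spheres simultaneously. This is exactly the symmetry that makes the four side radii equal and that will later let the cross-section $A(x)$ be computed from only two distinct radii, namely $r_{x_1}, r_{x_2}$ from Observation~1 and the common side radius derived here.
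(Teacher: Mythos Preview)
Your proof is correct; the paper states this result as an observation without proof, and your coordinate setup together with the Pythagorean relation $\sqrt{R^2 - (R/2 - x)^2}$ is exactly the natural justification any reader would supply.
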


\begin{lemma}
$A(x) = \pi(2Rx - x^2)$ from $x = 0$ to $x = l_1$, where $l_1 = R(\frac{3 - \sqrt{7}}{4}) \approx 0.0886R$
\end{lemma}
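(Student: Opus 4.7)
The plan is to recognize that the cross-section $A(x)$ of $\mathcal{S}(R)$ by the sliding plane $P_x$ is the intersection of six disks (one for each sphere), and to prove that in the small-$x$ regime $0\leq x\leq l_1$ this intersection coincides with the disk bounded by $C_{x_2}$ alone. Granted that, the area is immediately $\pi r_{x_2}^2=\pi(2Rx-x^2)$, matching the claimed formula.

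Setting up coordinates on $P_x$ with origin at the common center of $C_{x_1}$ and $C_{x_2}$ (they are concentric by Observation 1), I would first observe that since $r_{x_2}\le r_{x_1}$, the intersection of these two concentric disks is just the disk bounded by $C_{x_2}$. By symmetry of the six-sphere configuration, the four remaining circles $C_{y_1},C_{y_2},C_{z_1},C_{z_2}$ have a common radius $r_s=\sqrt{3R^2/4+Rx-x^2}$ (Observation 2) and their centers are the orthogonal projections of the corresponding side face centers onto $P_x$, each lying at distance $R/2$ from the origin. The whole argument then reduces to a single containment statement: each of these four off-center disks contains the disk bounded by $C_{x_2}$.

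The farthest point of the disk bounded by $C_{x_2}$ from any of the four side centers is at distance $R/2+r_{x_2}$, so the containment is equivalent to
\begin{equation*}
\frac{R}{2}+\sqrt{2Rx-x^2}\;\le\;\sqrt{\frac{3R^2}{4}+Rx-x^2}.
\end{equation*}
Squaring once (both sides nonnegative) reduces this to $\sqrt{2Rx-x^2}\le R/2-x$, which also forces $x\le R/2$; squaring again gives the quadratic inequality $2x^2-3Rx+R^2/4\ge 0$. Applying the quadratic formula yields the two roots $R(3\pm\sqrt{7})/4$, and within the interval $[0,R/2]$ the inequality holds precisely for $x\le l_1=R(3-\sqrt{7})/4$, the other root lying outside this range. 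Combining the two observations, for $0\le x\le l_1$ all five other disks contain the disk bounded by $C_{x_2}$, so $A(x)=\pi r_{x_2}^2=\pi(2Rx-x^2)$.

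The only mildly delicate step is identifying the correct geometric criterion in the second paragraph — namely, realizing that the binding constraint is that the farthest point of the small concentric disk from an off-center side disk must still lie within that side disk. Once this reduction is in place, the rest is routine algebra whose extremal root is exactly $l_1$, so this quadratic is what pins down the endpoint of the first regime.
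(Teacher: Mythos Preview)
Your proof is correct and follows essentially the same approach as the paper: both identify the cross-section with the disk bounded by $C_{x_2}$, reduce the endpoint determination to the internal tangency condition $r_{z_1}=R/2+r_{x_2}$, and solve the resulting quadratic $2x^2-3Rx+R^2/4=0$ to obtain $l_1=R(3-\sqrt{7})/4$. If anything, your version is slightly more explicit in justifying \emph{why} the six-disk intersection equals the small concentric disk (via the farthest-point containment argument), whereas the paper simply asserts this and proceeds directly to the tangency equation.
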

\begin{proof}
We observe that, $A(x)$ is exactly $C_{x_2}$ as $P_x$ slides below $x = 0$. So to compute $A(x)$ we only need to compute 
the radius $r_{x_2}$ of $C_{x_2}$. This value can be computed as
$\sqrt{R^2 - (R-x)^2} $, which proves the claim regarding $A(x)$. Now inorder to know $l_1$ (the event
when the cross section changes), we need to find the value of $x$ at which $C_{x_2}$ is tangential to $C_{z_1}$, this is also
the instance at which $C_{x_2}$ is tangential to each $ C_{y_1},C_{y_2}$ and $C_{z_1}$. To find this $x$ we need to solve the
following equation, $r_{x_2} = R/2 + r_{z_1}$
\small
\begin{eqnarray*}
\sqrt{\frac{3R^2}{4} + Rx -x^2} &=& \sqrt{2Rx - x^2} + \frac{R}{2} \\
\frac{3R^2}{4} + Rx -x^2 &=& 2Rx-x^2 +\frac{ R^2}{4} + R\sqrt{2Rx - x^2} \\
\frac{R}{2} - x &=&  \sqrt{2Rx -x^2}\\
2x^2 - 3Rx + \frac{R^2}{4} &=& 0
\end{eqnarray*}
The root of the above quadratic equation which is less than $R/2$ is $x = \frac{(3 - \sqrt{7})R}{4}$. This proves the lemma.
\end{proof}

\begin{lemma}
$A(x) =  \frac{1}{2} \cdot r^2 \cdot \left( \frac{\pi}{2} - 2 \sin^{-1} \left( \frac{Y}{2\sqrt{2Rx - x^2}}\right) \right)$  
+ $\frac{1}{2}\cdot Y \cdot \sqrt{r^2 - \frac{Y^2}{4}} + \left( \frac{1}{2}\cdot r^2 \cdot 2 \sin^{-1}\left( \frac{Y}{2r_{y_1}}\right) \right)$ 
$- \left(\frac{1}{2}\cdot Y \cdot  \sqrt{{r_{y_1}}^2 - \frac{Y^2}{4}}\right)$ where $Y = 2\sqrt{\frac{7R^2}{4} + 3Rx -2x^2},
r = \sqrt{2Rx-x^2}$ and $r_{y_1} = \sqrt{Rx-x^2+\frac{3R^2}{4}}$ from $x = l_1$ to $x = l_2$ where 
$l_1 = R(\frac{3 - \sqrt{7}}{4})$ and $l_2 =  (\frac{2}{3} - \frac{\sqrt{10}}{6})R$
\end{lemma}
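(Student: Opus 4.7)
The plan is to characterize the cross-section of $\mathcal{S}(R)$ on $l_1\le x\le l_2$, the regime during which $C_{x_2}$ has grown just large enough to protrude past each of the four side-disks on its far side. By Lemma~1, at $x=l_1$ the disk $C_{x_2}$ is internally tangent to each of $C_{y_1},C_{y_2},C_{z_1},C_{z_2}$ (all of common radius $r_{y_1}$ from Observation~2, with centers at distance $R/2$ from the cube-axis in the plane $P_x$), while $C_{x_1}$ remains strictly larger than $C_{x_2}$ and hence non-binding by Observation~1. For $x$ slightly beyond $l_1$, each side-disk cuts off a small lens-shaped sliver of $C_{x_2}$ on the far side from its own center, and by four-fold rotational symmetry these four slivers are congruent and (for small enough $x-l_1$) pairwise disjoint; the cross-section is thus $C_{x_2}$ with those four slivers removed.

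First I would place coordinates on $P_x$ with the cube-axis at the origin, so the four side-disk centers sit at $(\pm R/2,0)$ and $(0,\pm R/2)$. Subtracting the equations of $C_{x_2}$ and $C_{y_1}$ gives their common chord as the vertical line $X=x-R/2$, with half-chord length $h=\sqrt{3Rx-2x^2-R^2/4}$ and perpendicular distances $R/2-x$ and $R-x$ from the centers of $C_{x_2}$ and $C_{y_1}$, respectively. A quick comparison of vertical extents on the two sides of this chord shows that the minor segment of $C_{y_1}$ lies inside the minor segment of $C_{x_2}$, so the sliver area equals the difference of those two minor segments.

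Next I would invoke the standard sector-minus-triangle formula for a minor segment of a circle of radius $\rho$ at center-to-chord distance $d$, namely $\rho^2\sin^{-1}(h/\rho)-hd$. Expressing the sliver as the difference of the two such segments, invoking four-fold symmetry to write $A(x)=\pi r_{x_2}^2-4\cdot(\text{sliver})$, and rearranging each segment into the ``sector $+$ isoceles triangle'' form would produce the expression stated in the lemma (with $Y=2h$, $r=r_{x_2}$, and $r_{y_1}$ as in Observation~2). For the upper endpoint $l_2$, I would characterize it as the first $x$ at which two adjacent slivers just meet: a sliver subtends half-angle $\alpha$ at the origin with $\sin\alpha=h/r_{x_2}$, and adjacent slivers meet precisely when $\alpha=\pi/4$, i.e., when $h^2=r_{x_2}^2/2$. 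Substituting $h^2=3Rx-2x^2-R^2/4$ and $r_{x_2}^2=2Rx-x^2$ yields the quadratic $6x^2-8Rx+R^2=0$, whose smaller root is $R(4-\sqrt{10})/6=\bigl(\tfrac{2}{3}-\tfrac{\sqrt{10}}{6}\bigr)R$.

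The main obstacle I anticipate is the algebraic bookkeeping needed to match the precise symbolic form stated in the lemma (the coefficients of each $\sin^{-1}$ term and the expression defining $Y$), together with checking that no non-adjacent side-disk or $C_{x_1}$ becomes a binding constraint anywhere in the sliver throughout $[l_1,l_2]$. Sanity checks at the two endpoints -- $x=l_1$, where $h=0$ and $A(x)$ should collapse back to the $\pi r_{x_2}^2$ of Lemma~1, and $x=l_2$, where the slivers just begin to coalesce -- would guide the simplification.
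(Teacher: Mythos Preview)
Your proposal is essentially the paper's own argument: the paper likewise computes the common chord $GH=2h=2\sqrt{3Rx-2x^2-R^2/4}$, decomposes the eight-sided region into four sectors of $C_{x_2}$ (angle $\theta=\pi/2-\alpha$) plus four pieces of the form ``$\Delta GOH$ + cap of a side-circle''---which is algebraically the same as your $\pi r_{x_2}^2-4\cdot(\text{lune})$ with each lune written as a difference of two circular segments---and fixes $l_2$ by the equivalent condition that $C_{x_2}$ passes through the corners of $C_{y_1}\cap C_{y_2}\cap C_{z_1}\cap C_{z_2}$, yielding the identical quadratic $6x^2-8Rx+R^2=0$. Your anticipated bookkeeping obstacle is real: the lemma as printed carries evident typos (the paper's own proof gives $GH=2\sqrt{3Rx-R^2/4-2x^2}$, not the $7R^2/4$ appearing in $Y$, and the displayed $A(x)$ omits the overall factor of $4$), so your endpoint sanity checks would not match the stated formula verbatim but would match the expression derived in the proof.
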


\begin{proof}
Recall that above $x = l_1$, the cross-section is a circle and at $x = l_2$,  $C_{x_2}$ is tangential to $C_{z_1}$. As $P(x)$
slides slightly below $x = l_1$, the cross-section is a 8-sided closed region drawn in Fig.\ref{z2}. Let $O$
be the center of the square and $\angle EOF =\angle GOH = \angle IOJ = \angle KOL = \alpha$, $\angle FOG = \angle HOI = \angle
JOK = \angle LOE$.
\begin{figure}
\centering
\includegraphics[scale = 0.25]{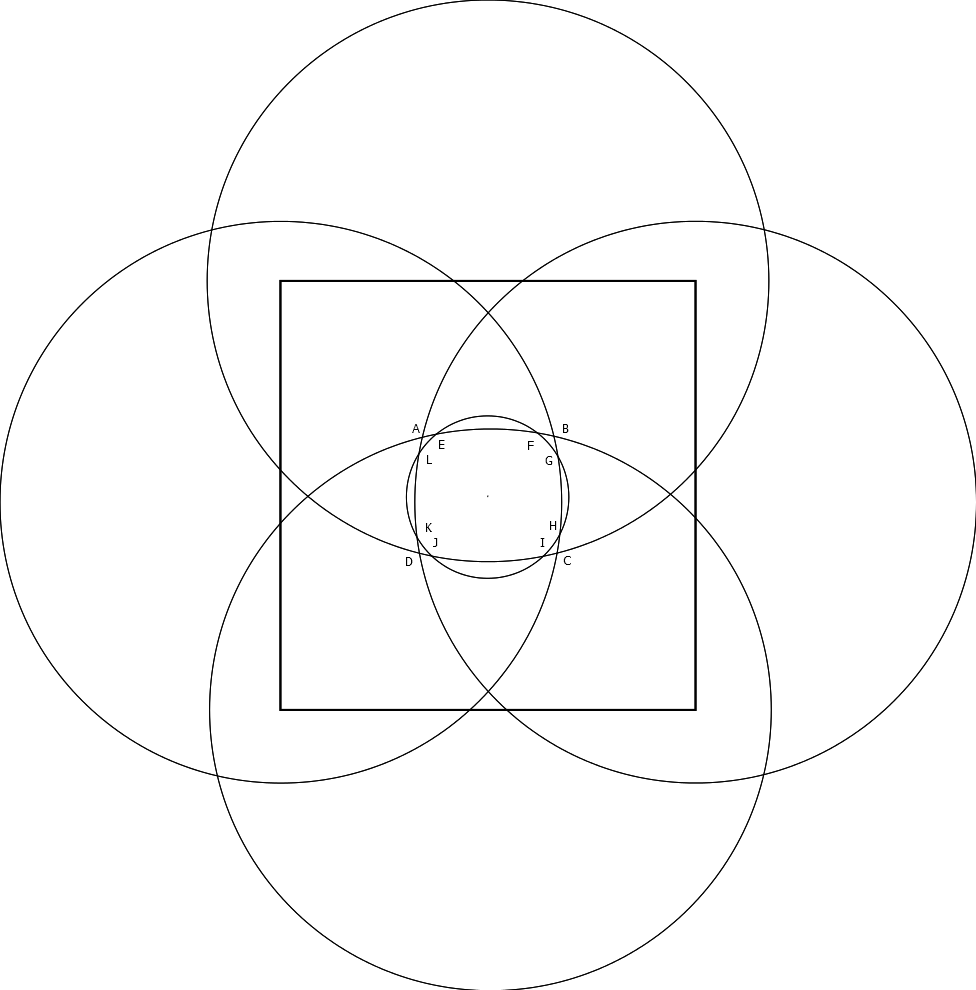}
\caption{Cross-Section of $A(x)$ is the enclosed region $EFGHIJKLE$, for $l_1 \leq x \leq l_2$}
\label{z2}
\end{figure}
Our aim is to find out the area of  the region $EFGHIJKLE$. From the notation used in the Fig. \ref{z2} we deduce that
$\alpha + \theta = \frac{\pi}{2}$. Also it is easy to see that $\alpha = 2 \sin^{-1} \left(\frac{GH}{2r}\right)$ where $r = r_{x_2}$ and
$GH$ is the length of the line segment joining $G$ and $H$. We first show how to compute $GH$ in terms of $R$ and $x$. \\
\textbf{Computing $GH$}:
Notice that $G$ and $H$ are points of intersection of the circles $C_{y_1}$ and $C_{x_2}$. Let us write the equation
of these circles assuming $O$ to be the origin and lines parallel to $y$ and $z$ axis passing though it as $x'$ and $y'$ axes.
The equations of $C_{y_1}$ and $C_{x_2}$ in this coordinate system would be $x'^{2} + (y'-\frac{R}{2})^{2} = r_{y_1}^2$ and
$ (x'-\frac{R}{2})^{2} +  (y'-\frac{R}{2})^{2} = r_{x_2}^2$ respectively. Subtracting the later from the former gives
$x' = \frac{1}{R}\left(r_{y_1}^2 - r_{x_2}^2 + \frac{R^2}{4} \right)$. Now if we conside the equation of $C_{y_1}$
as a quadratic equation in $y'$ with $x' = \frac{1}{R}\left(r_{y_1}^2 - r_{x_2}^2 + \frac{R^2}{4} \right)$ then the modulus of the difference of the
roots of this equation will exactly be the length of $GH$. Simple algebraic manipulation shows that
\begin{eqnarray*}
GH = 2\sqrt{3Rx - R^2/4 - 2x^2}
\end{eqnarray*}

The desired area can be decomposed into two parts. (i) 4 times the area
of sector $EOG$ (ii) 4 times the area of region $GOH$. We compute each of these as follows:
\begin{itemize}
\item[(a)] Area of sector $EOG$ is
\small
\begin{eqnarray*}
r^2 \cdot \theta/2 &=& \frac{1}{2} \cdot \left(2Rx - x^2 \right) \cdot \left( \frac{\pi}{2} - 2 \sin^{-1} \left( \frac{GH}{2\sqrt{2Rx - x^2}}\right) \right)
\end{eqnarray*}
\item[(b)] Area of the region $GOH$ =  Area of ($\Delta GOH$) + Area of the Cap, which is equal to \\
$\frac{1}{2}\cdot GH \cdot \sqrt{r^2 - \frac{GH^2}{4}} + \left( \frac{1}{2}\cdot r^2 \cdot 2 \sin^{-1}\left( \frac{GH}{2r_{y_1}}\right) \right)$ \\
$- \left(\frac{1}{2}\cdot GH \cdot  \sqrt{{r_{y_1}}^2 - \frac{GH^2}{4}}\right)$

\end{itemize}
Adding the aforementioned two values, multiplying by 4, and replacing the values of $GH, r, r_{y_1}$, gives us the expression of $A(x)$ in the lemma. \\
\textbf{Computing $l_2$}:
To find the value of $x$ at which the cross-section changes from the above mentioned 8-sided region, we need
to look at the instant when the circle $C_{x_2}$ circumscribes the region formed due the intersection of
$C_{y_1}, C_{y_2}, C_{z_1}, C_{z_2} $. At this instant the following equality holds;
\small
\begin{eqnarray*}
r_{x_2} &=& \sqrt{2} \left( z - R/2 \right) \\
&=& \sqrt{2}\left( \frac{R + \sqrt{5R^2 + 8Rx - 8x^2}}{4} - \frac{R}{2}\right) \\
& =&  \sqrt{2}\left( \frac{\sqrt{5R^2 + 8Rx - 8x^2}}{4} - \frac{R}{4}\right) \\
\sqrt{2Rx -x^2} &=& \sqrt{2}\left( \frac{\sqrt{5R^2 + 8Rx - 8x^2}}{4} - \frac{R}{4}\right) \\
5R^2 - 8x^2 + 8Rx &=& \sqrt{16Rx - 8x^2} + R \mbox{ (rearranging terms) } \\
6x^2 - 8Rx + R^2 &=& 0   \mbox{ (after a few algebraic manipulations) }
\end{eqnarray*}
which solves to $x = (\frac{2}{3} + \frac{\sqrt{10}}{6})R$ and $x = (\frac{2}{3} - \frac{\sqrt{10}}{6})R$, since $0 \leq x \leq R/2$, the only root which is of concern for
us is $x =  (\frac{2}{3} - \frac{\sqrt{10}}{6})R$.
\end{proof}

\begin{lemma}
$A(x) = 4\left[\frac{1}{2}\cdot s^2\cdot \theta - \frac{1}{2} \cdot z \cdot (2z - R)\right] + 
4\left[(y - \frac{R}{2})\right]^2$.
where $\theta = \angle AOB = 2\sin^{-1}\frac{z-\frac{R}{2}}{z}$ and $s = \sqrt{Rx - x^2 + \frac{3R^2}{4}}$, $z =  \frac{R + \sqrt{5R^2 - 8x^2 + 8Rx}}{4}$.
\end{lemma}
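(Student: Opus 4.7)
The plan is to show that for $l_2 \le x \le R/2$ the cross-section $A(x)$ reduces to the intersection of just the four lateral disks $C_{y_1}, C_{y_2}, C_{z_1}, C_{z_2}$, and then exploit $4$-fold rotational symmetry about $O$ to decompose that intersection into a central axis-aligned square plus four congruent circular segments. For the first reduction, I would argue that the top and bottom disks no longer constrain the cross-section on this range. By the very definition of $l_2$ from the previous lemma, at $x = l_2$ the disk $C_{x_2}$ is exactly the circumscribing circle of the $4$-fold lateral intersection; Observation 1 gives that $r_{x_2}$ increases monotonically in $x$, while a direct comparison shows that $r_{x_1} = \sqrt{R^2 - x^2}$ also stays above the circumradius $\sqrt{2}(z - R/2)$ of the lateral intersection on $[l_2, R/2]$. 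Hence both $C_{x_1}$ and $C_{x_2}$ contain the lateral intersection, and $A(x)$ equals the area of the latter.

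Next, I would identify the vertices and decompose. Setting up $(u, v)$-coordinates in $P_x$, the four lateral circles have centers at the edge midpoints $(R/2, 0), (R/2, R), (0, R/2), (R, R/2)$ and common radius $s$ (Observation 2). Subtracting the equations of adjacent circles (for instance $C_{y_1}$ and $C_{z_1}$) forces $u = v$, and substituting back gives a quadratic whose relevant root is precisely the $z$ in the statement. By the $90^\circ$ symmetry, the four vertices of the cross-section are $(z, z), (R - z, z), (R - z, R - z), (z, R - z)$, forming an axis-aligned square of side $2z - R$. The 4-fold intersection is then this square together with four congruent ``caps'' bulging outward through its sides; the cap on each side is a circular segment cut by an arc of the lateral circle centred on the \emph{opposite} edge (so the right cap is cut by $C_{z_1}$, etc.). The central square contributes $[2(z - R/2)]^2 = 4(z - R/2)^2$, accounting for the second summand of the formula (with $y$ read as the same quantity as $z$). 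For the right cap, the chord is vertical at $u = z$ with endpoints $(z, R - z)$ and $(z, z)$, while the bounding circle $C_{z_1}$ has centre $(0, R/2)$ and radius $s$; the perpendicular distance from that centre to the chord is $z$ and the half-chord is $z - R/2$, so the segment decomposes as a sector of angle $\theta = 2\sin^{-1}\frac{z - R/2}{s}$ and area $\tfrac{1}{2} s^2 \theta$ minus an isoceles triangle of base $2(z - R/2)$ and height $z$, area $\tfrac{1}{2} z (2z - R)$. Multiplying this segment area by four and adding the square area yields the asserted expression.

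The main technical obstacle is the first reduction: verifying uniformly over $[l_2, R/2]$ that both $C_{x_1}$ and $C_{x_2}$ contain the lateral 4-fold intersection requires tracking several $x$-dependent radii simultaneously and using that the worst (circumscribing) instant for $C_{x_2}$ happens exactly at $x = l_2$ by construction, combined with the monotonicity from Observation 1, and then separately bounding $r_{x_1}$ from below by the vertex distance $\sqrt{2}(z - R/2)$. I would also flag that the denominator $z$ inside the $\sin^{-1}$ in the statement of $\theta$ is most naturally read as $s$: the central angle of the arc is being measured at the centre of a lateral circle whose radius is $s$, not $z$, so this appears to be a minor typographical slip rather than a new geometric ingredient.
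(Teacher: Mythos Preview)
Your proposal is correct and follows essentially the same route as the paper: reduce $A(x)$ on $[l_2,R/2]$ to the common intersection of the four lateral disks, then decompose that region into the central square $ABCD$ of side $2z-R$ together with four congruent circular caps, computing each cap as a sector of radius $s$ minus the triangle on the chord. Your derivation of $z$ by intersecting two adjacent lateral circles reproduces exactly the value the paper uses, and your flag about the denominator in $\theta$ is well taken: the central angle is subtended at a lateral circle of radius $s$, so $\theta = 2\sin^{-1}\!\big((z-\tfrac{R}{2})/s\big)$ is the natural expression, and the $z$ in the printed formula is indeed a slip rather than a different geometric choice. The only place where you go slightly beyond the paper is in explicitly checking that $C_{x_1}$ (not just $C_{x_2}$) contains the lateral intersection on this range; the paper simply asserts the reduction, so your extra monotonicity check is additional rigor rather than a different idea.
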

\begin{proof}
We notice that for $l_2 \leq x R/2$, $A(x)$ is the area which is common to $C_{y_1} , C_{y_2},  C_{z_1}, C_{z_2}$. Fig.\ref{z7} shows the geometry of $A(x)$. Again let $O$ be the center of the square
\begin{figure}
\centering
\includegraphics[scale = 0.36]{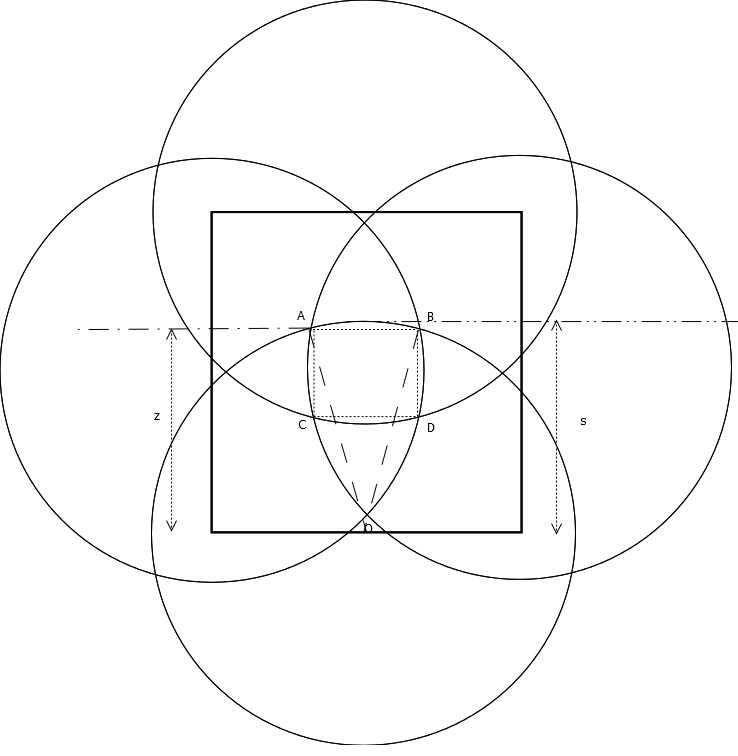}
\caption{Cross-Section of $A(x)$ is the enclosed region $ABCD$ for $l_2 \leq x \leq R/2$}
\label{z7}
\end{figure}
We divide it into two parts, (i) The square $ABCD$ (ii) Four caps surrounding the square each of equal area. We compute
each of these areas as follows:
\begin{itemize}
\item[(a)] The area of the square is $l^2$ where $l$ is the side length of the square, which according to the figure is $\{2\left(z-\frac{R}{2}\right)\}^2$. It
can be verified that $z = \frac{R + \sqrt{5R^2 + 8Rx - 8x^2}}{4}$ and $s = r_{z_1} = \sqrt{\frac{3R^2}{4} + Rx -x^2}$. Thus area of the square is
\begin{eqnarray*}
\left(2 \cdot \frac{\sqrt{5R^2 - 8x^2 + 8Rx}- R}{4}\right)^2
\end{eqnarray*}

\item[(b)] The area of a cap (say the cap $AB$) can be obtained by subtracting the area of $\Delta OAB$ from the area of sector $OAB$
which is equal to
\begin{eqnarray*}
\frac{1}{2}\cdot s^2\cdot \theta - \frac{1}{2} \cdot z \cdot (z - \frac{R}{2})
\end{eqnarray*}
where $\theta = \angle AOB = 2\sin^{-1}\frac{z-\frac{R}{2}}{z}$ and \\
 $s = \sqrt{Rx - x^2 + \frac{3R^2}{4}}$, $z =  \frac{R + \sqrt{5R^2 - 8x^2 + 8Rx}}{4} $ 

\end{itemize}
Putting the values of $z$ and $s$ and multiplying the area of the cap by $4$ gives us the desired result.
We use observation 1, to conclude that the cross section doesn't change for all values  of $x$ between $l_3$
and $R/2$. Thus $l_4 = R/2$.
\end{proof}

As a consequence of the previous results, we can conclude the following:

\begin{theorem}

\begin{eqnarray*}
 V = 2 \left[ \int^{l_1}_{0} A(x)\,dx +  \int^{l_2}_{l_1} A(x)\,dx +  \int^{R/2}_{l_2} A(x)\,dx \right]
\end{eqnarray*}
\end{theorem}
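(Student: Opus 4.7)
The plan is to establish the formula for $V$ by first reducing the range of integration using the symmetry of $\mathcal{S}(R)$ and then splitting the resulting integral over $[0,R/2]$ into three pieces determined by the three geometric regimes identified in Lemmas 1, 2, and 3. Since $\mathcal{S}(R)$ is invariant under reflection through the horizontal midplane $x = R/2$ (the top and bottom spheres play symmetric roles, as recorded by Observation 1), the cross-sectional area satisfies $A(x) = A(R-x)$, and therefore
\begin{equation*}
V = \int_0^R A(x)\,dx = 2\int_0^{R/2} A(x)\,dx.
\end{equation*}

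Next, I would apply additivity of the Riemann integral to decompose the right-hand side at the two transition values $l_1 = (3-\sqrt{7})R/4$ and $l_2 = (2/3 - \sqrt{10}/6)R$. A short numerical check gives $l_1 \approx 0.0886\,R$ and $l_2 \approx 0.1399\,R$, so $0 < l_1 < l_2 < R/2$, and the decomposition
\begin{equation*}
2\int_0^{R/2} A(x)\,dx = 2\left[\int_0^{l_1} A(x)\,dx + \int_{l_1}^{l_2} A(x)\,dx + \int_{l_2}^{R/2} A(x)\,dx\right]
\end{equation*}
is well-defined. On each of the three subintervals, the integrand $A(x)$ coincides with the explicit expression furnished by the corresponding lemma, so the claimed identity follows immediately.

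The main obstacle is verifying that these three regimes are \emph{exhaustive}, i.e.\ that no further topological change of the cross-section occurs on $[0,R/2]$ beyond the two events at $x = l_1$ and $x = l_2$. This is a continuity-plus-monotonicity argument powered by Observations 1 and 2: the radii $r_{x_2}$, $r_{y_i}$, $r_{z_i}$ vary monotonically in $x$, so the set of $x$ at which the boundary circles on $P_x$ undergo a qualitative change (a containment becoming a tangency, a tangency becoming a transverse intersection, or a transverse intersection becoming a circumscription) is discrete. Lemma 1 pins down the first such event at $x = l_1$, where $C_{x_2}$ becomes tangential to each of $C_{y_1}, C_{y_2}, C_{z_1}, C_{z_2}$; Lemma 2 pins down the second at $x = l_2$, where $C_{x_2}$ circumscribes the region bounded by those four circles. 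For $x \in (l_2, R/2]$ the cross-section is cut out by the four side circles alone, and Observation 1 together with Observation 2 rules out any additional transition before $x = R/2$. Once exhaustiveness is in hand, plugging the piecewise formulas into the decomposition above gives the theorem.
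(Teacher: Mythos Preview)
Your proposal is correct and follows exactly the same route as the paper, which simply states the theorem ``as a consequence of the previous results'' after Lemmas~1--3 and the symmetry reduction $V = 2\int_0^{R/2} A(x)\,dx$ set up at the start of Section~3. In fact you supply more justification than the paper does (the ordering check $0<l_1<l_2<R/2$ and the exhaustiveness discussion), so there is nothing to correct.
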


In order to get a good estimate on the value of the above definite integral, we use Matlab to perform the 
numerical integration. The value of $V$ turns out to be approximately $0.685R^3$.

\subsection{Computing the 4-covered Volume}
To compute the volume of the FoI we need to compute the volume 
that is only 3-covered, as the remaining volume is 4-covered. We notice that as the
horizonal plane slides from $x =0$ to $x = R/2$, the topology of the
cross section that is only 3-covered changes twice, thus to compute the volume
inside the cube that is only 3-covered have to perform an integration similar to 
case of the Sixsoid. It can be verified the two transitions happen at $x = (\frac{2}{3} - \frac{\sqrt{10}}{6})R$
and $x = \frac{(3 - \sqrt{5})R}{4}$ respectively. In the following we how $A'(x)$, the area of cross section that is only 3-covered,
varies between $x = 0$ to $x =  (\frac{2}{3}  - \frac{\sqrt{10}}{6})$ and between $x = \frac{(3 - \sqrt{5})R}{4}$ to $x = R/2$.
We omit the details of the expression when $x$ is between $(\frac{2}{3}  - \frac{\sqrt{10}}{6})$ and $x = \frac{(3 - \sqrt{5})R}{4}$.
(present in an extended version of the paper)
 
\begin{lemma}
$A'(x) = 8\left[\frac{R^2 - z^2}{2} - \{ (\frac{s^2 \cdot \theta}{2} - \frac{z(z - R/2)}{2}) \} \right] + 8 \left[(\frac{s^2 \cdot \theta'}{2} - \frac{\sqrt{3}R^2}{4}) \} \right] $ for $x = 0$ to $x = (\frac{2}{3}  - \frac{\sqrt{10}}{6}) R$ where $z =  \frac{R + \sqrt{5R^2 - 8x^2 + 8Rx}}{4} $, $\theta = \sin^{-1}(\frac{z}{r})$, $\theta' = \sin^{-1}(\frac{\sqrt{3}}{2})$, $s =\sqrt{Rx - x^2 + \frac{3R^2}{4}}$
\end{lemma}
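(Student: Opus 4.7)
The plan is to carry out, for the first regime $x \in [0,l_2]$, the same sliding-plane analysis used in Lemmas~1--3 for the Sixsoid, but now keeping track at each point of how many of the six sensors cover it rather than merely checking membership in all six. First I would observe that for every $x \in [0,R/2]$ the top-sphere cross-section $C_{x_1}$ has radius $r_{x_1}=\sqrt{R^{2}-x^{2}}\geq \sqrt{3}R/2$, which exceeds the half-diagonal $R/\sqrt{2}$ of the $R\times R$ square cross-section of the cube. Hence $C_{x_1}$ already contains the whole square and the top sensor contributes one unit of coverage everywhere. The $3$-covered set is therefore precisely the locus of points lying in exactly two of the five remaining disks $C_{x_2},C_{y_1},C_{y_2},C_{z_1},C_{z_2}$.

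Second, I would exploit the dihedral $D_{4}$ symmetry of the arrangement. The five remaining disks are permuted among themselves by every one of the $8$ symmetries of the square, so it suffices to analyse a single octant (say $0\leq y'\leq x'\leq R/2$) and multiply by $8$; the prefactor $8$ in the lemma is exactly the signature of this reduction. Within the upper-right quadrant any point is automatically inside the two ``near'' side disks $C_{y_2}$ and $C_{z_1}$ (a direct distance check using $s\geq \sqrt{3}R/2>R/\sqrt{2}$), so exact $3$-coverage there reduces to being \emph{outside} the two ``far'' disks $C_{y_1}, C_{z_2}$ and outside $C_{x_2}$. I would then check that on the range $[0,l_2]$ the last condition is automatic: the innermost point of the claimed $3$-covered region is the intersection $(z-R/2,\,z-R/2)$ of $C_{y_1}$ and $C_{z_2}$ (the inner root of the very system that produced $z$ in Lemma~4), whose distance $\sqrt{2}(z-R/2)$ from the origin equals $r_{x_2}$ exactly at $x=l_2$. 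This both pins down the valid range and identifies the two arcs that actually bound the region.

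Third, I would compute the octant area by decomposing the resulting piece -- bounded by the right square edge, the diagonal $y'=x'$, and an arc of $C_{z_2}$ -- into the two contributions recorded in the lemma. Each bracket is a sector-minus-triangle area. The first bracket involves the variable angle $\theta=\sin^{-1}(z/s)$, interpreted as the central half-angle at the centre of $C_{y_1}$ (equivalently $C_{z_2}$) subtending the chord from the inner intersection point to the square's boundary, and is obtained by taking a right-triangular area controlled by $R$ and $z$ and subtracting the corresponding circular segment, giving $\frac{R^{2}-z^{2}}{2}-\bigl(\frac{s^{2}\theta}{2}-\frac{z(z-R/2)}{2}\bigr)$. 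The second bracket $\frac{s^{2}\theta'}{2}-\frac{\sqrt{3}R^{2}}{4}$, with the constant angle $\theta'=\sin^{-1}(\sqrt{3}/2)=\pi/3$, is a sector-minus-equilateral-triangle area: the constancy of $\theta'$ points to a fixed $60^{\circ}$ configuration of cube-face centres that persists throughout the range, and I would read it off from the $\pi/3$ angle subtended at a side-sphere centre by a chord whose length is dictated by the cube's geometry rather than by $x$.

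The hard part will be the case analysis: verifying that the octant's $3$-covered region is bounded \emph{exactly} by the arcs claimed, and in particular that it is not further clipped either by the growing disk $C_{x_2}$ or by the near side disks $C_{y_2},C_{z_1}$ on the entire range $[0,l_2]$. One must also confirm that across the intermediate transition at $x=l_1$ (where the topology of the $6$-covered Sixsoid cross-section changes because $C_{x_2}$ becomes internally tangent to each side circle there) the topology of the $3$-covered region does \emph{not} change, so that a single closed form for $A'(x)$ remains valid on the full range $[0,l_2]$. Once the correct two pieces of the octant's $3$-covered region are isolated, each evaluates in closed form via the standard sector/triangle formulas, yielding the expression in the lemma.
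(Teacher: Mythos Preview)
Your proposal is correct and follows essentially the same approach as the paper, whose own proof is only a two-sentence sketch that points to Figure~\ref{z5} and asserts that ``a simple but tedious calculation gives us the result.'' Your plan supplies exactly the details the paper omits---the observation that $C_{x_1}$ already covers the full square, the $D_4$ reduction to a single octant (accounting for the prefactor $8$), the identification of which side disks are ``near'' versus ``far'' in that octant, and the sector-minus-triangle reading of the two bracketed terms---so there is nothing to correct.
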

\begin{proof}
\emph{(Sketch)} Consider the cross-section as depicted in the Figure \ref{z5}. The area that is only 3-covered is the union of 
regions $AEXF, GYHB, TICJ, LSKD$. A simple but tideous calculation gives us the result.   
\end{proof}

\begin{lemma}
$A'(x)$ is equal to $R^2 + 4(2Rx -x^2) \cos^{-1} \left(\frac{R}{2\sqrt{2Rx-x^2}} \right) - \left(2R\sqrt{2Rx - x^2 -\frac{R^2}{4}} \right) 
-\pi(2Rx - x^2)$ for $x = (\frac{3 - \sqrt{5}}{4})R$ to $x= \frac{R}{2}$.
\end{lemma}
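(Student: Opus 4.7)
The plan is to identify the only-3-covered part of the cross-section at height $x$ as the union of four disjoint ``corner regions'' lying outside the disk $C_{x_2}$, and then to compute its area as (area of the square) minus (area of $C_{x_2}\cap\text{square}$). First I would set up coordinates so that the cross-section of the cube is the square $[0,R]^{2}$ with $C_{x_1}$ and $C_{x_2}$ centered at $(R/2,R/2)$ and the four side disks centered at the edge midpoints $(0,R/2)$, $(R,R/2)$, $(R/2,0)$, $(R/2,R)$. Since $r_{x_1}=\sqrt{R^{2}-x^{2}}\ge R\sqrt{3}/2>R/\sqrt{2}$ throughout the interval, $C_{x_1}$ contains the entire square, so any only-3-covered point is automatically outside $C_{x_2}$. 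For $x$ in the stated range, $r_{x_2}=\sqrt{2Rx-x^{2}}\in[R/2,R/\sqrt{2}]$, so $C_{x_2}$ cuts each edge of the square in two points but does not reach any of the four corners; the complement of $C_{x_2}$ inside the square is therefore four disjoint corner regions, each bounded by two edges and an arc of $C_{x_2}$.

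Next I would pin down the coverage on one such region---by symmetry, the one near $(R,R)$. The two adjacent side disks $C_{y_2}$ and $C_{z_2}$ are centered at $(R,R/2)$ and $(R/2,R)$ with radius at least $R\sqrt{3}/2$, so a direct diameter bound shows they both contain the whole corner square $[R/2,R]^{2}$. The crux is showing that the corner region lies entirely outside the two opposite side disks $C_{y_1}$ and $C_{z_1}$. Writing $p=\sqrt{r_{x_2}^{2}-R^{2}/4}$, the arc of $C_{x_2}$ meets the top edge at $(R/2+p,R)$, and the squared distance of this point from the center $(0,R/2)$ of $C_{y_1}$ simplifies to $r_{x_2}^{2}+Rp+R^{2}/4$. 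Comparing with $r_{y_1}^{2}=3R^{2}/4+Rx-x^{2}$ gives the discriminant $R(p+x-R/2)$, which is nonnegative iff $2x^{2}-3Rx+R^{2}/2\le 0$, whose relevant root is exactly $x=(3-\sqrt{5})R/4$. A short monotonicity argument---using that every point $(u,v)$ in the corner region satisfies $u-R/2\ge p$ and $v-R/2\ge p$---then lifts this boundary check to the entire region, so it is covered by exactly $\{C_{x_1},C_{y_2},C_{z_2}\}$.

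With the identification in hand, the area calculation is routine. Decompose the square into four congruent corner squares of side $R/2$. Inside one of them, $C_{x_2}$ (centered at the inner vertex, radius $r_{x_2}\in[R/2,R/\sqrt{2}]$) occupies a quarter disk of area $\pi r_{x_2}^{2}/4$ minus two congruent ``half-segments'' that stick out past the two outer edges; by the standard chord formula, each half-segment has area $\tfrac{1}{2}\bigl[r_{x_2}^{2}\cos^{-1}(R/(2r_{x_2}))-(R/2)\sqrt{r_{x_2}^{2}-R^{2}/4}\bigr]$. Subtracting this intersection from $R^{2}/4$, multiplying by $4$, and substituting $r_{x_2}^{2}=2Rx-x^{2}$ yields the expression in the statement. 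The main obstacle is the tangency calculation of the previous paragraph: recognising $(3-\sqrt{5})R/4$ as the precise threshold at which the extremal boundary point of a corner region crosses out of the opposite side disk---this is what forces the lower endpoint of the interval.
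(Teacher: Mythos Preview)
Your overall strategy---identify the only-$3$-covered set as four corner pieces and compute its area as $R^{2}$ minus $|C_{x_2}\cap[0,R]^{2}|$---is exactly the route the paper sketches (it too points to four corner regions $AST,\,BUV,\,CWX,\,ZYD$ and says ``a simple calculation gives us the result''). Your segment decomposition in the last paragraph reproduces the stated formula verbatim.

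There is, however, a genuine gap. Your assertion that $r_{x_2}=\sqrt{2Rx-x^{2}}\in[R/2,\,R/\sqrt{2}]$ on the whole interval is false: at $x=R/2$ one has $r_{x_2}=R\sqrt{3}/2>R/\sqrt{2}$, and in fact $r_{x_2}\geq R/\sqrt{2}$ for every $x\geq R(1-1/\sqrt{2})\approx 0.293R$. Once $r_{x_2}\geq R/\sqrt{2}$ the disk $C_{x_2}$ contains the entire square, so your ``four disjoint corner regions outside $C_{x_2}$'' are empty and the quarter-disk-minus-two-half-segments decomposition breaks down (the four segments now overlap at the corners). Consequently your argument, as written, only covers the sub-range $(3-\sqrt{5})R/4\leq x\leq R(1-1/\sqrt{2})$, not up to $R/2$.

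This is not a cosmetic issue. For $x\geq R(1-1/\sqrt{2})$ every point of the square lies in $C_{x_1}$, in $C_{x_2}$, and---since each pair of opposite side disks has radius $r_{y_1}\geq R\sqrt{3}/2>R/\sqrt{2}$ and hence jointly covers the square---in at least two of the four side disks; thus every point is at least $4$-covered and the true $A'(x)$ vanishes there. Yet the displayed expression gives $A'(R/2)\approx 0.096\,R^{2}\neq 0$. So the obstacle is not a missing case you could patch: the claimed formula cannot hold on the full stated interval, and any correct proof must confront this rather than assume $r_{x_2}\leq R/\sqrt{2}$.
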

\begin{proof}
\emph{(Sketch)} Consider the cross-section as depicted in the Figure \ref{z5}. The area that is only 3-covered is the union of regions $AST,BUV,CWX, ZYD$. 
Again a simple calculation gives us the result.   
\end{proof}

\begin{figure}
\centering
\includegraphics[scale = 0.5]{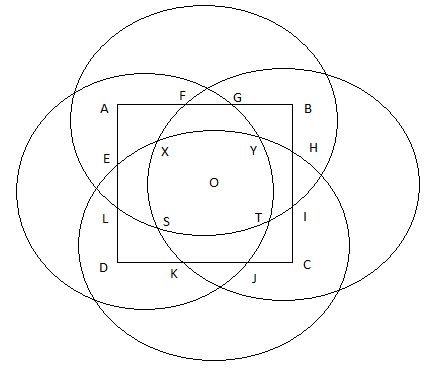}
\caption{For $0 \leq x \leq (\frac{2}{3} - \frac{\sqrt{10}}{6})R$, $A'(x)$ is the union of regions $AEXF, GYHB, TICJ, LSKD$}
\label{z5}
\end{figure}

\begin{figure}
\centering
\includegraphics[scale = 0.5]{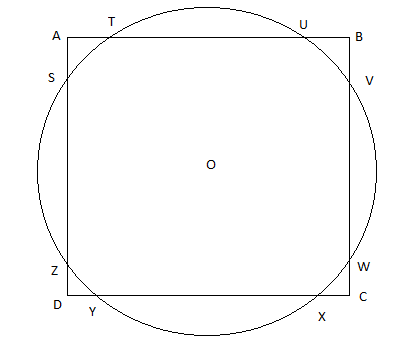}
\caption{For $(\frac{2}{3} - \frac{\sqrt{10}}{6})R \leq x \leq  \frac{(3 - \sqrt{5})R}{4}$, $A'(x)$ is the union of regions $AST,BUV,CWX, ZYD$.}
\label{z6}
\end{figure}

\begin{theorem}
The volume $V'$ inside a cube that is 4-covered is   
\begin{eqnarray*}
V' = 2 \left[ \int^{l_1'}_{0} A(x)\,dx +  \int^{l_2'}_{l_1'} A(x)\,dx +  \int^{R/2}_{l_2'} A(x)\,dx \right]
\end{eqnarray*}
where $l_1' = (\frac{2}{3} - \frac{\sqrt{10}}{6})R$, $l_2' = \frac{(3 - \sqrt{5})R}{4}$
\end{theorem}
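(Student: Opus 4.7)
The plan is to carry the slicing strategy from Theorem~1 over to the 4-covered subset of the cube. Throughout, I read $A(x)$ in the integrand as the area of the 4-covered portion of the horizontal slice at height $x$ from the top face; by the preceding subsection, on each height level the slice decomposes into an only-3-covered region of area $A'(x)$ and its 4-covered complement, so $A(x)$ is recoverable from the $A'(x)$ formulas of Lemmas~5 and~6 (together with the omitted middle lemma) on each subinterval.

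My first step is Cavalieri's principle: $V' = \int_0^R A(x)\,dx$. I then exploit the reflection $x \mapsto R - x$, which swaps the face-centered spheres $S_{x_1}$ and $S_{x_2}$ while fixing the set of four lateral spheres $S_{y_1}, S_{y_2}, S_{z_1}, S_{z_2}$; since the 4-covered locus is invariant under this reflection, $A(R - x) = A(x)$, giving $V' = 2\int_0^{R/2} A(x)\,dx$.

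The second step is to justify that the topology of $A(x)$ changes on $[0, R/2]$ exactly at the two claimed heights $l_1' = (2/3 - \sqrt{10}/6)R$ and $l_2' = (3 - \sqrt{5})R/4$. These are obtained from tangency/circumscription equations analogous to those used to derive $l_1, l_2$ in the proofs of Lemmas~1 and~2. At $l_1'$, the topology change arises from $C_{x_2}$ first being circumscribed by the common intersection of the four lateral circles; setting $r_{x_2} = \sqrt{2}(z - R/2)$ with $z$ as in Lemma~3 collapses exactly (as in the $l_2$ derivation in Lemma~2) to $6x^2 - 8Rx + R^2 = 0$, whose root in $(0, R/2)$ is $l_1'$. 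At $l_2'$, the analogous event occurs with $C_{x_1}$ playing the role of $C_{x_2}$; substituting $r_{x_1} = \sqrt{R^2 - x^2}$ into the symmetric tangency equation and reducing yields a quadratic whose relevant root is $(3 - \sqrt{5})R/4$. Monotonicity of $r_{x_1}, r_{x_2}$ (Observation~1) guarantees these are the only transitions in the half-interval.

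With the two transitions identified and the piecewise expressions for $A(x)$ already in hand, splitting the half-cube integral at $l_1'$ and $l_2'$ and applying the symmetry factor of $2$ delivers the claimed formula. The main obstacle is the second step: one has to correctly diagnose which pair of circles undergoes tangency or circumscription at each critical $x$, rule out additional intermediate transitions, and carry out the algebraic reduction without sign errors — exactly the kind of bookkeeping illustrated in the proofs of Lemmas~1 and~2.
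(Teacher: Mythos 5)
Your overall strategy (Cavalieri slicing, the reflection $x \mapsto R-x$ to halve the integration range, and splitting at the two heights where the cross-sectional topology changes) is exactly the route the paper takes implicitly; the paper in fact gives no proof of this theorem beyond asserting ``it can be verified'' that the transitions occur at $l_1'$ and $l_2'$, so your attempt to actually derive those values is the substantive part. Your derivation of $l_1'$ is fine: it coincides with the circumscription event already computed as $l_2$ in Lemma~2, giving $6x^2-8Rx+R^2=0$. (One reading issue: the paper's own numerics show the displayed integral is meant to be of $A'(x)$, the only-3-covered area, yielding $0.048R^3$, with $V'=R^3-0.048R^3$; your reading of the integrand as the 4-covered slice area is the only one under which the displayed equation for $V'$ is literally true, so this is a charitable and defensible interpretation, but worth flagging as a discrepancy with the paper.)

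The genuine gap is your derivation of $l_2'$. You attribute it to a tangency/circumscription event in which ``$C_{x_1}$ plays the role of $C_{x_2}$'' with $r_{x_1}=\sqrt{R^2-x^2}$. But $C_{x_1}$ never participates in any transition on $[0,R/2]$: its radius satisfies $r_{x_1}=\sqrt{R^2-x^2}\ge \tfrac{\sqrt{3}}{2}R > \tfrac{R}{\sqrt{2}}$, the circumradius of the square cross-section, so $C_{x_1}$ contains the entire square throughout and contributes a constant $+1$ to the coverage count of every point. Carrying out the substitutions you propose gives quadratics with roots such as $\tfrac{(1\pm\sqrt{7})R}{4}$ or $\tfrac{(2\pm\sqrt{10})R}{6}$, none of which lie in $(0,R/2)$, and none equal to $\tfrac{(3-\sqrt{5})R}{4}$. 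The actual event at $l_2'$ involves $C_{x_2}$ and the boundary of the square: it is the height at which $C_{x_2}$ passes through the points where the lateral circles $C_{y_i},C_{z_i}$ meet the square's edges. Writing that condition out, with the lateral circle $C_{y_1}$ meeting the top edge at $\bigl(-\tfrac{R}{2}+\sqrt{\tfrac{R^2}{2}+Rx-x^2},\,\tfrac{R}{2}\bigr)$, the requirement that this point lie on $C_{x_2}$ reduces to $R-x=\sqrt{\tfrac{R^2}{2}+Rx-x^2}$, i.e.\ $2x^2-3Rx+\tfrac{R^2}{2}=0$, whose root in $(0,R/2)$ is indeed $\tfrac{(3-\sqrt{5})R}{4}$. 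Your appeal to monotonicity of $r_{x_1},r_{x_2}$ to exclude further transitions is also too weak, since the lateral radii vary as well; as written, the second step of your argument does not establish the claimed critical heights.
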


We again use numerical methods to estimate this integral, which is approximately $0.048 R^3$.
Thus the volume of region that is 4-covered is approximately $0.952 R^3$.

\section{Deployment Strategy in 3D}

Consider a tessellation of a (poly)cubical 3D Field of Interest (FoI) using cubes
of side length $R$. We place the sensors each of sensing radius
$r = R$ on the centers of the faces of every cube.
Clearly, every sensor will be shared by 2 cubes in the tiling (except the boundary ones).

\begin{lemma}
According to the aforementioned packing every point in the FoI is 3-covered, approximately 
95.2 \% of FoI is 4-covered and about 68.5 \% of the FoI is 6-covered. 
\end{lemma}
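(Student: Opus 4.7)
The plan is to derive the three coverage statistics by combining Theorem~1 and Theorem~2 with a short symmetry argument for universal 3-coverage. By definition, the Sixsoid $\mathcal{S}(R)$ is the intersection of the six sensing spheres $S_{x_1}, S_{x_2}, S_{y_1}, S_{y_2}, S_{z_1}, S_{z_2}$ of a single cube of the tiling, so every point of $\mathcal{S}(R)$ is 6-covered by the sensors of that cube. Since $\mathcal{S}(R)$ sits inside the cube and the cubes tile the FoI with only measure-zero overlaps on their faces, Theorem~1 gives a 6-covered fraction of at least $\mathrm{Vol}(\mathcal{S}(R))/R^3 \approx 0.685$. In the same way Theorem~2 delivers the $95.2\%$ figure, since within each cube the volume covered by four or more of its own sensors equals $R^3 - V' \approx 0.952 R^3$. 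Shared sensors on common faces are counted once and sensors from neighbouring cubes can only increase coverage, so these percentages are valid lower bounds over the whole FoI.

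For the first claim, I would argue that every point of every cube $C$ of the tiling is 3-covered by the six sensors of $C$ alone. The cube's symmetry group permutes its six face-centred sensors, so it is enough to verify 3-coverage on a single octant $O$ of $C$ cut off at one corner by the three planes through the cube's centre parallel to the faces. Let $s_1, s_2, s_3$ be the sensors whose face centres lie on the three faces of $C$ adjacent to the defining corner of $O$. For any $p \in O$ and any $s_i$, each of the three coordinates of $p - s_i$ has absolute value at most $R/2$---two of the coordinates of $s_i$ equal the centre value $R/2$, and the third coincides with the corner's value---so the squared distance satisfies
\begin{equation*}
\| p - s_i \|^2 \leq 3 \cdot (R/2)^2 = \tfrac{3}{4} R^2 < R^2 .
\end{equation*}
Hence $p$ lies inside each of $s_1, s_2, s_3$, proving that $O$ is 3-covered; invoking the symmetries of $C$ and then ranging over all cubes of the tiling completes the argument.

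The 4- and 6-coverage percentages are essentially bookkeeping once Theorems~1 and~2 are in hand, so the only non-routine step is the 3-coverage proof above. The main subtlety there is the choice of the triple $(s_1, s_2, s_3)$: for the octant cut off at a specific corner of $C$, these are precisely the sensors on the three faces meeting at that corner, which are the three nearest to every point of $O$, and the per-coordinate bound $R/2$ is tight in the worst case at the opposite corner of the octant (the centre of $C$). The inequality $\tfrac{3}{4}R^2 < R^2$ is strict, so the covering is robust and the conclusion holds everywhere inside the cube---including on the shared faces and corners---yielding universal 3-coverage of the FoI.
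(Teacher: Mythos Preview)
Your argument is correct, and for the $4$- and $6$-coverage percentages it does exactly what the paper does: invoke Theorem~1 and Theorem~2 and divide by the cube volume $R^3$. Where you differ is in the proof of universal $3$-coverage. The paper argues via the sliding-plane machinery of Section~3: it slices the cube by $P_x$, notes that the topology of the arrangement of circles $C_{\alpha_i}$ on the square cross-section changes only at $x=l_1$ and $x=l_2$, and asserts that within each range the whole square is covered by at least three of the circles. Your octant decomposition is a genuinely different and more elementary route: reducing by the cube's symmetry group to a single corner octant $O=[0,R/2]^3$, picking the three sensors on the faces meeting at that corner, and bounding each coordinate of $p-s_i$ by $R/2$ to get $\|p-s_i\|^2\le 3R^2/4<R^2$. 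This is self-contained (no cross-section analysis needed), gives a uniform quantitative margin, and also makes transparent why a sensing radius of $\frac{\sqrt{3}}{2}R$ would already suffice for $3$-coverage.

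One small inaccuracy in your commentary: the three coordinate bounds $|p_j-(s_i)_j|\le R/2$ are \emph{not} simultaneously tight at the centre of $C$; at the centre two of the three differences vanish. For a fixed $s_i$ the simultaneous extremum $\|p-s_i\|^2=3R^2/4$ occurs at a different vertex of the octant (for instance, for $s_1=(0,R/2,R/2)$ it is the point $(R/2,0,0)$). This does not affect the validity of the inequality or of the proof.
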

\begin{proof}
By construction, every point inside a Sixsoid
formed will be covered by 6 sensors. We just have to
prove that every other point will be 4-covered. According
to the notations used in the construction of Sixsoid
in Section 3, which we borrow here, this is equivalent to
proving that every point in the cross-section (which is
a sqaure of side length R) obtained by slicing the cube
with the plane $P_x$ is 3-covered for all $0 \leq x \leq R$. Again
by symmetry of $S(R)$, we only have to prove this for
all values between $0 \leq x \leq R/2$. As we have noticed in
Section 3 that due to the nature of Sixsoid the topology
of the cross section changes at two values of $x$, namely
$x = l_1$ and $x = l_2$. Consider the arrangement of circles
formed on the square of side length $R$, it can be verified that for
all the ranges when the topology of the cross-section remains fixed
(i.e. from 0 to $l_1$, $l_1$ to $l_2$ and $l_2$ to $R/2$) the entire cross
section is 3-covered. Also from the previous section the claims regarding 
4 and 6 coverage follows.
\end{proof}

Recall that according to our deployment strategy every sensor is
shared by 2 cubes. Thus given a 3D cubical FoI of volume $V_{FoI}$ the number of sensors of radius $r$ needed according
to our deployment strategy is $\frac{3V_{FoI}}{r^3}$ where $r$ is the radius of the sensing 
spheres. 

\begin{table}
\small
 \caption{Volume comparison between Sixsoid and Reuleaux Tetrahedron}
\label{t2}
\begin{center}
    \begin{tabular}{ | l | l | l |}
    \hline
    Volume   & Reuleaux Tetrahedron & Sixsoid  \\ \hline
    $r = 20$  &2787 &  5482   \\ \hline          
    $r = 25$  &5443 & 10708  \\ \hline
    $r = 30$  & 9406 & 18503  \\ \hline
    $r = 35$  &14936 & 29382  \\ \hline
    $r = 40$  &22296 & 43860  \\ \hline
    $r = 45$  &31745 & 62449  \\ \hline
    $r = 50$  &43546 & 85663   \\ \hline
    \end{tabular}
\end{center}
\end{table}

\begin{table}
\small
 \caption{Sensor spatial density comparison between Sixsoid and Reuleaux Tetrahedron for $r = 25$}
\label{t2}
\begin{center}
    \begin{tabular}{ | l | l | l |}
    \hline
    Spatial Density ($\times 10^{-4}$)  & Reuleaux Tetrahedron & Sixsoid  \\ \hline
    $k = 4$  &7.349 &  3.7356   \\ \hline          
    $k = 5$  &9.187 & 4.6694  \\ \hline
    $k = 6$  &11.023 & 5.6033  \\ \hline
    $k = 7$  &12.86 & 6.5372  \\ \hline
    $k = 8$  &14.697 & 7.4711  \\ \hline
    \end{tabular}
\end{center}
\end{table}

\section{Comparison of Two Models}
In this section we present the comparison of our model with the Releaux Tetrahedron Model.
We compare the volume of Sixsoid with the volume of Reuleaux Tetrahedron with varying sensing radius. This comparison is tabulated in Table \ref{t2}. Next, we evaluate the sensor spatial density for $k$-coverage in the Sixsoid model and compare it with the Reuleaux Tetrahedron model. Recall that the minimum sensor spatial density per unit volume needed for full $k$-coverage of a 3D field is defined as a function of sensing radius $r$ as $\lambda = \frac{k}{vol(\mathcal{S}(R))}$ where $R = r$. We take the similar parameter for Reuleaux Tetrahedron from \cite{A,AD1}, which is $\frac{k}{0.422r_0^3}$ where $r_0 = \frac{r}{1.066}$. Table \ref{t3} shows the comparison of the minimum sensor spatial densities and we find that Sixsoid requires much less sensors per unit volume to guarantee $k$-coverage.

\section{Conclusion and Future Work}
In this paper, we discussed a new geometric model for addressing the 
problem of $k$-coverage in 3D homogeneous Wireless Sensor Networks
which we have named as the Sixsoid. We show
in a number of ways how this model outperforms the existing model of Reuleaux 
Tetrahedron \cite{A,AD1}, namely the volume of the convex body involved,
which in turn implies improved spatial density and a better (in terms of coverage)
and more pragmatic deployment strategy for sensors in a given 3D Field of Interest. 
From the point of view of geometry, our construction of Sixsoid and its volume computation 
might be of independent interest. We suspect our construction might have interesting
consequences for non-homogenoeus networks as well which we leave as a direction of future research.

\bibliographystyle{abbrv}

\end{document}